\documentclass[12pt]{article}

\usepackage{amssymb}
\usepackage{fullpage}
\usepackage{graphicx}

\newtheorem{theorem}{Theorem}
\newtheorem{lemma}{Lemma}

\newcommand{\IR}{\mathbb{R}}
\newcommand{\Yao}{\mathord{\it Yao}}
\newcommand{\CPD}{\mathord{\it CPD}}
\newcommand{\VD}{\mathord{\it VD}}
\newcommand{\DHP}{\mathord{\it DH}}
\newcommand{\Del}{\mathord{\it Del}_{\square}}
\newcommand{\DS}{\mathord{\it DS}}
\newcommand{\RC}{\mathord{\it RC}}
\newcommand{\MW}{\mathord{\it MW}}

\newcommand{\qed}{\rule{0.5em}{1.5ex}}
\newcommand{\fqed}{{\hfill~\qed}}
\newenvironment{proof}{{\noindent \bf Proof.}}
                      {{\hfill \fqed} \vspace{1em}}

%%%%%%%%%%%%Separating rule above/below figures%%%%%%%%%%%%%%%%%%%%%
\setlength{\textfloatsep}{5ex}

%%%%%%%%%%%%%%%%%%%%%%%%%%%%%%%%%%%%%%%%%%%%%%%%%%%%%%%%%%%%%%%%%%%%

\title{Closest-Pair Queries in Fat 
Rectangles\thanks{S.W.\ Bae was supported by the Basic Science Research 
Program through the National Research Foundation of Korea (NRF) 
funded by the Ministry of Education (2018R1D1A1B07042755). 
M.\ Smid was supported by the Natural Sciences and Engineering Research 
Council of Canada (NSERC).}}
\author{
Sang Won Bae\thanks{Division of Computer Science and Engineering, 
    Kyonggi University, Suwon, South Korea.}
\and 
Michiel Smid\thanks{School of Computer Science, Carleton University,
    Ottawa, Canada.} 
}
\date{\today}

\begin{document} 

\maketitle 

\begin{abstract} 
In the range closest pair problem, we want to construct a data structure 
storing a set $S$ of $n$ points in the plane, such that for any 
axes-parallel query rectangle $R$, the closest pair in the set $R \cap S$ 
can be reported. The currently best result for this problem is by 
Xue et al.~(SoCG 2018). Their data structure has size $O(n \log^2 n)$ and 
query time $O(\log^2 n)$. We show that a data structure of size 
$O(n \log n)$ can be constructed in $O(n \log n)$ time, such that 
queries can be answered in $O(\log n + f \log f)$ time, 
where $f$ is the aspect ratio of $R$. Thus, for fat query rectangles, 
the query time is $O(\log n)$. This result is obtained by reducing the 
range closest pair problem to standard range searching problems on 
the points of $S$. 
\end{abstract} 

\section{Introduction} 
Range searching and closest pair problems have been well-studied in 
computational geometry. In both problems, we are given a finite set $S$ 
of points. In the range searching problem, we want to construct a data 
structure, such that for any query range $R$, the elements of $R \cap S$ 
can be reported or counted. In the closest pair problem, the goal is to 
design an algorithm that computes the closest pair in the set $S$. 
Overviews of the main results for these problems can be found in the 
survey papers by Agarwal and Erickson~\cite{ae-99} and Smid~\cite{s-97}.    

In this paper, we consider the \emph{range closest pair} problem: 
Given a set $S$ of $n$ points in the plane, construct a data structure 
such that for any axes-parallel query rectangle $R$, the closest pair in 
the set $R \cap S$ can be reported.   

The range closest pair problem was introduced by 
Shan et al.~\cite{szs-03}. They presented experimental results based 
on R-trees, but did not give a complexity analysis. 
Gupta~\cite{g-05} showed that, for a set of $n$ points in one-dimensional 
space, a query can be answered in $O(\log n)$ time using a data structure 
of size $O(n)$. For the two-dimensional problem, he presented a data 
structure of size $O(n^2 \log^3 n)$ that has $O(\log^3 n)$ query time. 
Sharathkumar and Gupta~\cite{sg-07} improved the space bound to 
$O(n \log^3 n)$, while keeping the query time at $O(\log^3 n)$. 
They first prove this result for \emph{fat} query rectangles, i.e., 
rectangles whose aspect ratio (which is the longest side length divided 
by the shortest side length) is bounded from above by a constant. 
Then, they prove the same result for arbitrary query rectangles. 
Gupta et al.~\cite{gjks-14} gave a data structure of size $O(n \log^5 n)$,
that supports queries in $O(\log^2 n)$ time. Abam et al.~\cite{acfs-13} 
presented a variant of this data structure: The space and query time are 
the same as in~\cite{gjks-14}, but their structure can be constructed in 
$O(n \log^5 n)$ time. The currently best result is by 
Xue et al.~\cite{xlrj-18}. They presented a data structure of size   
$O(n \log^2 n)$ that supports queries in $O(\log^2 n)$ time. It is 
not known if their data structure can be constructed efficiently, say 
in subquadratic time.  

Xue et al.~\cite{xlj-18} introduced an approximate version of the range 
closest pair problem and gave a solution for the $d$-dimensional case, 
for any constant dimension $d \geq 1$. The results in~\cite{xlj-18} 
imply that a data structure of size $O(n \log^{d-1} n)$ can be 
constructed in $O(n \log^{d-1} n)$ time, such that the following holds: 
For any axes-parallel query rectangle $R$ in $\IR^d$ and any query 
value $\epsilon>0$, a pair $p,q$ of distinct points in $S$ can be 
computed, such that (i) $p \in R$, (ii) $q$ is contained in the 
expanded rectangle obtained by scaling $R$ by a factor of $1+\epsilon$ 
with respect to its center, and (iii) the distance between $p$ and $q$ 
is at most the closest-pair distance in $R \cap S$. Such a pair $p,q$ 
can be computed in $O(\log^{d-1} n + (f/\epsilon)^d \log (f/\epsilon))$ 
time, where $f$ denotes the aspect ratio of the query rectangle $R$. 
Observe that, since the point $q$ can be outside of $R$, the distance 
between $p$ and $q$ may be much smaller than the closest-pair distance 
in $R \cap S$. 

\subsection{Our Result} 
\label{secOR} 
We show that the approach of Sharathkumar and Gupta~\cite{sg-07} for fat 
query rectangles can be improved. Our main result is the following: 

\begin{theorem} 
\label{thm1} 
Let $S$ be a set of $n$ points in the plane. In $O(n \log n)$ time, we 
can construct a data structure of size $O(n \log n)$, such that for any 
axes-parallel query rectangle $R$, the closest pair in $R \cap S$ can 
be computed in $O(\log n + f \log f)$ time, where $f$ is the aspect 
ratio of $R$.
\end{theorem}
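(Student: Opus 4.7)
The plan is to reduce the range closest pair problem to standard orthogonal range searching. The first and most important step is a structural lemma: there exists a set $E$ of $O(n)$ candidate pairs of points of $S$, constructible in $O(n \log n)$ time, such that for any axis-aligned query rectangle of bounded aspect ratio, the closest pair (in Euclidean distance) inside the rectangle is an element of $E$. Natural candidates for $E$ are the edges of the $L_\infty$-Delaunay triangulation $\Del(S)$, a constant-cone Yao graph, or a union of such graphs covering all canonical orientations; each has $O(n)$ edges and is constructible in $O(n \log n)$ time.

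Assuming such an $E$ is in hand, I would build a data structure over $E$ that answers queries of the form: given an axis-aligned rectangle $Q$, return the pair $(p,q) \in E$ of minimum Euclidean distance subject to $\{p,q\} \subseteq Q$. Naively this is a four-dimensional range-minimum problem, but for fat queries the candidate edges are short relative to $Q$, so one can reformulate the containment condition in terms of a single representative per edge (for example, a midpoint, after a slight shrinking of $Q$). This collapses the query to an essentially two-dimensional range-minimum problem, solvable in $O(n \log n)$ space and $O(\log n)$ query time via a layered or persistent range tree. Fat queries are thereby answered within the stated bounds.

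For a general query rectangle $R$ with short side $s$ and aspect ratio $f$, I would distinguish two cases. If the closest-pair distance inside $R$ exceeds $s$, a packing argument shows $|R \cap S| = O(f)$; I then report $R \cap S$ by standard range reporting in $O(\log n + f)$ time and compute its closest pair directly in $O(f \log f)$ time. Otherwise the closest pair lies inside some $2s \times 2s$ square contained in $R$, so I cover $R$ by $O(f)$ such fat sub-rectangles and query the fat-query structure for each. To avoid an $O(f \log n)$ term, I would share the initial point-location work across sub-queries using fractional cascading, so that after one $O(\log n)$ navigation the per-sub-query cost drops to $O(\log f)$, yielding $O(\log n + f \log f)$ in total.

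The hardest step is the structural lemma: showing that a linear-size precomputed set of pairs suffices for every fat axis-aligned query window. This requires proving that whenever $(p,q)$ is the closest pair inside some fat window, its optimality is certified by a local empty region — such as a Yao cone or an $L_\infty$ ball containing only $p$ and $q$ from $S$ — whose shape is window-independent. The subtle case is when the natural certifying region would protrude beyond the window; handling this likely requires taking the union of several such graphs, one per canonical orientation or quadrant, while still maintaining linear total size.
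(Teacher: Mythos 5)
Your proposal hinges on a structural lemma that is false, and the paper itself contains the counterexample. You claim there is a precomputed set $E$ of $O(n)$ pairs such that for \emph{every} fat query rectangle the closest pair inside it belongs to $E$. The first lemma of Section~\ref{seccon} shows that for any $f>1$ the number of ``candidate pairs'' for the family of rectangles of aspect ratio at most $f$ is $\Omega(n^2)$: place $n/2$ points near each of the two intersections $a,b$ of a circle with the line $y=x$; for every $p$ near $a$ and $q$ near $b$, the nearly-square rectangle with corners $p$ and $q$ contains no other point of $S$, so $p,q$ is the closest pair of that fat rectangle. No $O(n)$-size graph (Yao, $L_\infty$-Delaunay, or any union of constantly many such graphs) can contain all these pairs, so your ``first and most important step'' cannot be carried out. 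The subtle case you flag --- the certifying empty region protruding beyond the window --- is not fixable by adding more canonical orientations; it is exactly where the linear bound breaks. (For squares the paper does show $O(n)$ candidate pairs via the order-$2$ $L_\infty$-Delaunay graph, but notes that it is unclear how to build and query that set within the stated bounds.) A second, independent problem is your 4D-to-2D collapse via a midpoint representative with a shrunken query: that yields only an approximate answer (this is essentially the Xue et al.\ approximation result cited in the introduction), since an edge can have its midpoint in the shrunken rectangle while an endpoint lies outside $R$, or be fully inside $R$ with its midpoint excluded.

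The paper escapes both difficulties by \emph{not} insisting that the answer come from a precomputed edge set. It first counts $|R\cap S|$; if this is $O(f)$ it brute-forces in $O(\log n+f\log f)$ time (this is where your $\Omega(n^2)$ pairs live, since each such rectangle contains only two points). Otherwise a packing argument forces $\CPD(R\cap S)<\delta\le\ell/2$, and the closest pair $p,q$ is certified either by a Yao edge or by membership in a small corner square. The Yao certification works only when the empty quarter-disk of radius $\delta$ around $p$ stays inside $R$; the paper enforces this by querying the weighted sets $S_k$ with the \emph{shrunken} rectangles $B_k$ (Lemma~\ref{obs1}), which simultaneously guarantees that the reported edge's second endpoint lies in $R$ --- this is what makes the reduction to a genuinely two-dimensional range-minimum query exact rather than approximate. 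The residual corner cases are handled by query-dependent anchored squares containing $O(1)$ points, found by a dedicated structure (Lemma~\ref{lemma4}) and brute-forced at query time; the pairs found there need not belong to any precomputed graph. Your proposal is missing all three of these ingredients (the density test before invoking the candidate set, the shrunken $B_k$ rectangles, and the query-dependent corner squares), and the first omission is fatal rather than repairable.
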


Thus, for fat query rectangles, i.e., rectangles whose aspect ratio is 
bounded by a constant, the query time is $O(\log n)$. In fact, the 
query time is $O(\log n)$ for any rectangle having aspect ratio 
$O(\log n / \log\log n)$.

Our result will be based on the following two claims. (These claims are
stronger variants of similar claims in Sections~4.1 and~4.2 
of~\cite{sg-07}.) Let $R$ be a query rectangle, let $\ell$ be its 
shortest side length, and let $f$ be its aspect ratio. 
\begin{itemize} 
\item If $R$ contains $O(f)$ points of $S$, then we can answer a query 
by using any optimal closest pair algorithm. 
\item Assume that $R$ contains $\Omega(f)$ points of $S$. 
A standard packing argument implies that the closest-pair distance in 
$R \cap S$ is less than $\ell/2$. In this case, we can compute four small 
squares ``anchored'' at the four vertices of $R$, each one containing 
$O(1)$ points of~$S$. That is, for each vertex $v$ of $R$, one of these 
small squares has $v$ as a vertex and is contained in $R$. If $p,q$ is 
the closest pair in $R \cap S$, then either (i) both $p$ and $q$ are 
contained in the same small square or (ii) at least one of $(p,q)$ and 
$(q,p)$ is an edge in the Yao-graph that uses four cones of angle 
$\pi/2$ (this graph will be defined in Section~\ref{secprelim}). 
This claim will be proved in Section~\ref{seccorrect}. 
\end{itemize}  
Based on these two claims, a range closest pair query will be reduced to 
several standard range searching queries on two-dimensional point sets. 
Note that in~\cite{sg-07}, the query is reduced to range searching queries 
in four-dimensional space. We are able to reduce the dimension from four 
to two, because we use a different way to partition the query rectangle 
into four small anchored squares and five other rectangles. 

The rest of this paper is organized as follows. In 
Section~\ref{secprelim}, we recall the basic packing argument, some basic 
range searching problems, and the Yao-graph. Our range closest pair data 
structure and the query algorithm will be given in Section~\ref{secmain}. 
In Section~\ref{seccon}, we give a simple example that shows that a 
point set $S$ may contain $\Omega(n^2)$ pairs of points, each of which 
is the closest pair for some fat rectangle. If we restrict queries to 
squares, however, then the number of possible closest pairs is only 
$O(n)$. We prove this claim by showing that each such pair is an edge 
in the second-order $L_{\infty}$-Delaunay graph.

\section{Preliminaries} 
\label{secprelim} 

For any two points $p$ and $q$ in the plane, we denote their Euclidean 
distance by $|pq|$. For any finite set $S$ of points in the plane, a 
\emph{closest pair} in $S$ is a pair $p,q$ of distinct points in $S$ 
whose distance $|pq|$ is minimum. The \emph{closest-pair distance}, i.e., 
the distance $|pq|$ of a closest pair $p,q$, will be denoted by 
$\CPD(S)$. If $S$ contains at most one point, then $\CPD(S) = \infty$. 

A proof of the following result can be found in the textbook by 
Cormen et al.~\cite{clrs-09}. 

\begin{lemma} 
\label{lemma3} 
Let $S$ be a set of $n$ points in the plane. The closest pair in $S$ can 
be computed in $O(n \log n)$ time.  
\end{lemma}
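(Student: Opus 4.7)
The plan is to use the classical divide-and-conquer scheme of Shamos and Hoey. First, I would sort $S$ by $x$-coordinate in $O(n \log n)$ time, and, in parallel, produce a list of the same points sorted by $y$-coordinate. These two orderings will be maintained through the recursion.

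The recursive step splits $S$ by a vertical line $\ell$ into two subsets $S_L$ and $S_R$ of size $\lceil n/2 \rceil$ and $\lfloor n/2 \rfloor$, solves the problem on each, and lets $d = \min(\CPD(S_L), \CPD(S_R))$. The combine step searches for a pair of distance less than $d$ with one endpoint in $S_L$ and the other in $S_R$. Any such pair must lie inside the vertical strip of width $2d$ centered at $\ell$. I would walk through the points of $S$ lying in this strip in order of $y$-coordinate, and for each such point $p$ compare $p$ only to the next constant number of points in the $y$-sorted list (seven suffices).

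The correctness of restricting to a constant number of successors is exactly the standard packing argument, which will be recalled in Section~\ref{secprelim}: in any $d \times d$ square, the subset $S_L$ contains $O(1)$ points (because every pair in $S_L$ has distance at least $d$), and similarly for $S_R$, so the $2d \times d$ rectangle just above $p$ inside the strip contains only $O(1)$ candidates. Hence the combine step runs in time linear in the strip size, and therefore in $O(n)$ time overall.

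This yields the recurrence $T(n) = 2 T(n/2) + O(n)$, whose solution is $T(n) = O(n \log n)$. The only subtle point, and the one I would be most careful about, is maintaining the $y$-sorted list at each recursive level in linear (not $O(n \log n)$) time per level: I would handle this either by a simultaneous mergesort on $y$-coordinates during the recursion, or by presorting once by $y$ globally and filtering the sorted list at each call by membership in $S_L$ or $S_R$. With this bookkeeping in place, the total running time is $O(n \log n)$, as claimed.
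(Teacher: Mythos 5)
Your proposal is the classical Shamos--Hoey divide-and-conquer algorithm, which is exactly the proof the paper relies on: the paper does not prove Lemma~\ref{lemma3} itself but cites the textbook of Cormen et al., where this same algorithm (including the $y$-sorted merging and the constant-successor strip argument) is presented. Your write-up is correct and matches that standard argument.
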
 

Throughout the rest of this paper, a \emph{rectangle} refers to a 
region $R$ in the plane defined by a Cartesian product 
$R = [a_x,b_x] \times [a_y,b_y]$, where $a_x$, $a_y$, $b_x$, and $b_y$ 
are real numbers with $a_x < b_x$ and $a_y < b_y$. 
The \emph{aspect ratio}, or \emph{fatness}, of $R$ is the ratio of 
$\max(b_x-a_x,b_y-a_y)$ and $\min(b_x-a_x,b_y-a_y)$.

Throughout the rest of this paper, we assume that no two points in $S$ 
share a coordinate along any dimension. We also assume, for ease of 
presentation, that the $n \choose 2$ distances defined by the pairs of 
points in $S$ are distinct, so that the closest pair in $R \cap S$ is 
uniquely defined for any rectangle $R$. (If distances are not distinct, 
then we can take the closest pair in $R \cap S$ to be the 
lexicographically smallest pair that achieves the minimum distance. 
All arguments in this paper will still be valid.) 

In the following lemma, we present the standard packing argument to 
prove the claim that if a rectangle $R$ contains ``many'' points, 
then the closest-pair distance in $R$ is ``small''. 
  
\begin{lemma} 
\label{lemma1} 
Let $S$ be a finite set of points in the plane, let $R$ a rectangle, 
let $\ell$ be the shortest side length of $R$, and let $f$ be the aspect 
ratio of $R$. 
\begin{enumerate} 
\item If $|R \cap S| > 4 \lceil 4f \rceil$, then 
      $\CPD(R \cap S) < \ell/2$. 
\item If $R$ is a square and $|R \cap S| \geq 5$, then 
      $\CPD(R \cap S) < \ell$. 
\end{enumerate} 
\end{lemma}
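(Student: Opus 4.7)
The plan is to prove both parts by the standard pigeonhole/packing argument: subdivide $R$ into a grid of small cells whose diameter is strictly less than the target distance, and then observe that if enough points lie in $R$, two of them must share a cell.

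For part~(1), assume without loss of generality that the shortest side of $R$ is vertical (length $\ell$) and the horizontal side has length $f\ell$, where $f \geq 1$. I would partition $R$ into a regular grid of $4 \times \lceil 4f \rceil$ congruent axis-parallel rectangles. Each cell then has vertical side at most $\ell/4$ and horizontal side at most $f\ell/\lceil 4f \rceil \leq f\ell/(4f) = \ell/4$, so each cell has diameter at most $\sqrt{2}\cdot \ell/4 < \ell/2$. There are $4\lceil 4f\rceil$ cells in total, so if $|R \cap S| > 4\lceil 4f\rceil$, by the pigeonhole principle some cell contains two distinct points of $S$, whose Euclidean distance is strictly less than $\ell/2$. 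This gives $\CPD(R \cap S) < \ell/2$.

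For part~(2), $R$ is a square of side $\ell$. I would partition it into a $2 \times 2$ grid of four congruent sub-squares, each of side length $\ell/2$ and hence of diameter $\ell\sqrt{2}/2 < \ell$. If $|R \cap S| \geq 5$, then by pigeonhole some sub-square contains at least two points of $S$, and the distance between them is at most $\ell\sqrt{2}/2 < \ell$, yielding $\CPD(R\cap S) < \ell$.

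There is no real obstacle here; the only mild subtlety is choosing the grid coarseness correctly so that the count of cells matches the bounds $4\lceil 4f\rceil$ and $4$ stated in the lemma. In particular, a coarser $2\times \lceil 2f\rceil$ partition would produce cells whose diameter can exceed $\ell/2$, so the $4\times \lceil 4f\rceil$ refinement is what makes the argument go through for part~(1). One should also note that points lying on the shared boundary of two cells can be assigned to either cell arbitrarily (using, for instance, the general-position assumption made earlier in the paper that no two points of $S$ share a coordinate, together with an appropriate half-open convention on cell boundaries), so the pigeonhole step is unambiguous.
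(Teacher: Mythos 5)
Your proof is correct and follows exactly the same packing/pigeonhole argument as the paper: a $4 \times \lceil 4f \rceil$ grid of cells with sides at most $\ell/4$ for part~(1), and a $2 \times 2$ grid of sub-squares of side $\ell/2$ for part~(2). The only difference is your added (harmless) remark about boundary conventions, which the paper leaves implicit.
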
 
\begin{proof} 
To prove the first claim, we assume, without loss of generality, that 
the vertical sides of $R$ have length $\ell$ and, thus, the horizontal 
sides have length $f \ell$. Partition $R$ into four horizontal slabs, 
each having height $\ell/4$, and $\lceil 4f \rceil$ vertical slabs, each 
having width at most $\ell/4$. These slabs partition 
$R$ into $4 \lceil 4f \rceil$ rectangles, each one having sides of 
length at most $\ell/4$. By the Pigeonhole Principle, 
one of these rectangles contains at least two points of $R \cap S$. 
These two points have distance at most $\sqrt{2} \cdot \ell/4 < \ell/2$.  

The proof of the second claim is similar. In this case, we divide the 
square $R$ into four squares, each one having sides of length $\ell/2$. 
\end{proof} 

Lemmas~\ref{lemma2}--\ref{lemma4} below are based on range trees. 
Let $S$ be a set of $n$ points in the plane. A \emph{range tree} consists 
of a balanced binary tree $T$ storing the points of $S$ at its leaves, 
in sorted order of their $x$-coordinates. For each node $u$ of $T$, let
$S_u$ be the set of points of $S$ that are stored in $u$'s subtree. 
We store with $u$ a pointer to an array $A_u$ storing the points of 
$S_u$, in sorted order of their $y$-coordinates. (For a more detailed 
description, see, e.g., the textbook by de Berg et al.~\cite{bcko-08}.)  

Consider a query rectangle $R$. Using the technique of 
\emph{fractional cascading}, in $O(\log n)$ total time, a sequence 
$u_1,u_2,\ldots,u_k$ of $k=O(\log n)$ nodes in $T$ can be computed, 
together with indices $\alpha_i$ and $\beta_i$, for $i=1,2,\ldots,k$, 
such that the $k$ subarrays $A_{u_i}[\alpha_i \ldots \beta_i]$ form a 
partition of $R \cap S$. Thus, by reporting the points stored 
in these subarrays, we report each point of the set $R \cap S$ exactly 
once. Also, the size of the set $R \cap S$ is equal to 
$\sum_{i=1}^k ( \beta_i - \alpha_i + 1)$. (Again, for more details, 
refer to de Berg et al.~\cite{bcko-08}.) The following lemma summarizes 
this standard application of range trees. 

\begin{lemma} 
\label{lemma2} 
Let $S$ be a set of $n$ points in the plane. In $O(n \log n)$ time, we 
can construct a data structure of size $O(n \log n)$, such that for any 
query rectangle $R$, 
\begin{enumerate}
\item the elements of the set $R \cap S$ can be reported in 
      $O(\log n + |R \cap S|)$ time, 
\item the size of the set $R \cap S$ can be reported in $O(\log n)$ 
      time. 
\end{enumerate} 
\end{lemma}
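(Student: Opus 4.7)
The plan is to instantiate the standard range tree with fractional cascading sketched in the paragraph preceding the lemma, and then to verify that both the reporting and the counting operations fit within the claimed bounds.

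First, I would build the primary balanced binary tree $T$ on the points of $S$ in sorted $x$-order; sorting costs $O(n \log n)$ and building the tree on top of the sorted sequence is linear. Next, for every node $u$ of $T$, I would construct the secondary array $A_u$ storing $S_u$ in sorted $y$-order. Doing this bottom-up by merging the two already-sorted children arrays at each node gives total work $O(\sum_u |S_u|) = O(n \log n)$, which simultaneously bounds the space. During these same merges I would install the Chazelle--Guibas fractional-cascading pointers between each $A_u$ and the arrays of its two children, so the overall preprocessing remains $O(n \log n)$.

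For a query $R = [a_x,b_x] \times [a_y,b_y]$, the plan is to descend $T$ with $[a_x,b_x]$ to locate the $k = O(\log n)$ canonical nodes $u_1,\ldots,u_k$ whose subsets $S_{u_i}$ partition the vertical slab $([a_x,b_x] \times \IR) \cap S$. At the root array I would binary-search once for $a_y$ and once for $b_y$; the fractional-cascading pointers then propagate these two positions down to each canonical node in $O(1)$ per level, producing indices $\alpha_i, \beta_i$ such that $A_{u_i}[\alpha_i \ldots \beta_i] = S_{u_i} \cap R$. This locating phase costs $O(\log n)$ in total. For part~1 I would then scan and output the $k$ subarrays, contributing $O(|R \cap S|)$ additional work. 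For part~2 I would return $\sum_{i=1}^{k} (\beta_i - \alpha_i + 1)$ in $O(\log n)$; correctness follows because the canonical subsets are pairwise disjoint and their union is exactly $R \cap S$.

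The only non-routine ingredient is the fractional-cascading step, which is carried out exactly as in the textbook treatment already cited in the excerpt. The main thing to double-check is that the same pair of indices $\alpha_i, \beta_i$ produced by fractional cascading serves both purposes simultaneously, so that counting requires no structure beyond what reporting already uses; this follows directly from the definition of the canonical partition.
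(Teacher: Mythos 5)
Your proposal is correct and follows exactly the construction the paper summarizes in the paragraph preceding the lemma: a range tree with fractional cascading, the $O(\log n)$ canonical nodes with subarray indices $\alpha_i,\beta_i$, reporting by scanning the subarrays, and counting by summing $\beta_i - \alpha_i + 1$. The paper treats this as standard and defers details to de Berg et al.; your writeup just makes those details explicit.
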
 

Instead of reporting or counting the elements of $R \cap S$, we will 
also need a data structure for the case in which the points of $S$ 
are weighted and we want to report the minimum weight of any point 
inside the query rectangle. 

\begin{lemma} 
\label{lemma23} 
Let $S$ be a set of $n$ weighted points in the plane. In $O(n \log n)$ 
time, we can construct a data structure of size $O(n \log n)$, such that 
for any query rectangle $R$, the minimum weight of any point in 
$R \cap S$ can be reported in $O(\log n )$ time.  
\end{lemma}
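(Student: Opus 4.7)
The plan is to augment the range-tree construction of Lemma~\ref{lemma2} with a range-minimum structure at every node. First I would build the range tree $T$ on the $x$-coordinates of $S$ and, for each node $u$, store the array $A_u$ of points of $S_u$ sorted by $y$-coordinate, each entry now carrying the weight of its point. Fractional cascading, already used in Lemma~\ref{lemma2}, produces for any query rectangle $R$ the sequence of nodes $u_1,\ldots,u_k$ with $k = O(\log n)$, together with the index pairs $(\alpha_i,\beta_i)$, in $O(\log n)$ total time, so that the subarrays $A_{u_i}[\alpha_i\ldots\beta_i]$ partition $R \cap S$.

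Next, on the sequence of weights of each array $A_u$, I would build a linear-space, constant-time range-minimum-query (RMQ) structure, such as the one of Bender and Farach-Colton. With this augmentation, the minimum weight inside $A_{u_i}[\alpha_i\ldots\beta_i]$ can be retrieved in $O(1)$ time per canonical subarray. Taking the minimum over the $k = O(\log n)$ such values yields the minimum weight of any point of $R \cap S$, and the total query time is $O(\log n)$ as required.

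For the resource bounds, the range tree with fractional cascading already uses $O(n \log n)$ space and is constructed in $O(n \log n)$ time. The RMQ structure on each $A_u$ adds $O(|S_u|)$ additional space and can be built in $O(|S_u|)$ time once $A_u$ is formed; since $\sum_u |S_u| = O(n \log n)$, the total overhead is absorbed into the same $O(n \log n)$ bounds.

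I do not expect a serious obstacle here; the only delicate point is to make sure that the indices $(\alpha_i,\beta_i)$ returned by fractional cascading are exactly indices into the arrays $A_{u_i}$ on which the RMQ structures were built, so that the constant-time lookup is valid. Since fractional cascading is defined precisely in terms of positions in these sorted arrays, this matches automatically, and no reindexing or additional work is needed at query time.
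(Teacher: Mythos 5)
Your proposal is correct and follows essentially the same route as the paper: a range tree whose node arrays $A_u$ are each preprocessed with the Bender--Farach-Colton constant-time range-minimum structure, with fractional cascading supplying the $O(\log n)$ canonical subarrays at query time. The space and time accounting ($\sum_u |A_u| = O(n\log n)$, $O(1)$ per subarray) matches the paper's proof as well.
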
 
\begin{proof}
Consider a range tree for $S$, as described above. For each node $u$ 
of the tree $T$, we use the algorithm of 
Bender and Farach-Colton~\cite{bf-00} to preprocess the array $A_u$ in 
$O(|A_u|)$ time, such that for any subarray, the minimum weight of any 
point in the subarray can be reported in $O(1)$ time. 

Given a query rectangle $R$, we compute, in $O(\log n)$ total time,  
the sequence $u_i$, $i = 1,2,\ldots,k=O(\log n)$, of nodes in $T$, and 
the sequences of indices $\alpha_i$ and $\beta_i$, $i=1,2,\ldots,k$. 
Then, for each $i$, we compute the minimum weight of any point in the 
subarray $A_{u_i}[\alpha_i,\beta_i]$ in $O(1)$ time. From this, we 
obtain the minimum weight of the points in $R \cap S$ in $O(\log n)$ 
time.  
\end{proof} 

The following lemma gives the data structure that we will need to 
find the anchored squares that were mentioned in Section~\ref{secOR}. 

\begin{lemma} 
\label{lemma4} 
Let $S$ be a set of $n$ points in the plane and let $c$ be a constant. 
In $O(n \log n)$ time, we can construct a data structure of size 
$O(n \log n)$, such that for any query point $q$ in~$\IR^2$, we can 
compute, in $O(\log n)$ time, the smallest square with bottom-left 
corner at $q$ that contains at least $c$ points of $S$. If such a 
square does not exist, then the query returns the infinite square 
with bottom-left corner at $q$. 
\end{lemma}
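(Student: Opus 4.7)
The plan is to reduce the query to two symmetric \emph{top-$c$ dominance range-minimum} problems. Write $q=(q_x,q_y)$ and $a_q := q_x-q_y$, and for each point $p\in S$ set $a_p := p_x-p_y$. A point $p$ lies in the square of side $s$ anchored at $q$ iff $p_x\ge q_x$, $p_y\ge q_y$, and $\max(p_x-q_x,p_y-q_y)\le s$. A short case analysis on the sign of $a_p-a_q$ shows that if $a_p\ge a_q$ and $p_y\ge q_y$ then $p_x\ge q_x$ automatically and the smallest admissible $s$ is $p_x-q_x$; symmetrically, if $a_p\le a_q$ and $p_x\ge q_x$ then $p_y\ge q_y$ and the smallest admissible $s$ is $p_y-q_y$. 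Hence the desired side length is the $c$-th smallest value in
\[
\{\,p_x-q_x : a_p\ge a_q,\ p_y\ge q_y\,\}\;\cup\;\{\,p_y-q_y : a_p\le a_q,\ p_x\ge q_x\,\},
\]
and since $c$ is a constant, picking the $c$ smallest from each of the two sets yields $2c$ candidates that certainly include the overall $c$ smallest.

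I would then build two symmetric data structures; describing only the first, it supports the query: given $(q_y,a_q)$, report the $c$ points of smallest $p_x$ among those with $p_y\ge q_y$ and $a_p\ge a_q$. Build a balanced binary tree $T$ whose leaves hold the points of $S$ sorted by $a_p$, and at each node $u$ store the array $A_u$ of $S_u$ sorted by $p_y$. For every index $i$ of $A_u$, precompute the list of (at most $c$) smallest $p_x$-values in the suffix $A_u[i\,..\,]$; a right-to-left sweep does this in $O(c\,|A_u|)$ time per node, so total preprocessing and storage are $O(n\log n)$. For a query, the condition $a_p\ge a_q$ decomposes into $O(\log n)$ canonical nodes of $T$; at each such $u$ I retrieve the precomputed list at the first index with $p_y\ge q_y$, contributing $c$ candidates. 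A linear scan over the $O(\log n)$ lists selects the $c$ smallest $p_x$-values in $O(\log n)$ time, and the symmetric structure contributes the other $c$ candidates, from which a constant-time pass yields the desired $s^*$ (or the infinite square, when fewer than $c$ candidates exist in total).

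The delicate step is keeping the total query time $O(\log n)$ rather than $O(\log^2 n)$: running a binary search separately inside each secondary array $A_u$ would multiply the logarithms. I plan to apply fractional cascading down the $p_y$-sorted arrays along the search path for $a_q$ in $T$, exactly as in the textbook two-level range-tree construction~\cite{bcko-08}; this delivers the correct starting index in each canonical $A_u$ in $O(\log n)$ total time, after which the precomputed $c$-minima and the final $O(\log n)$-element selection complete the query within the $O(\log n)$ budget.
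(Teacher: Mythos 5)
Your proposal is correct and follows essentially the same route as the paper: both split the quadrant at $q$ along the diagonal through $q$ into two cones, observe that within each cone the minimal admissible side length reduces to a single coordinate difference, and answer each half with a two-level range tree keyed on the diagonal coordinate $p_x-p_y$ (equivalently $p_y-p_x$) that stores precomputed suffix lists of the $c$ smallest relevant coordinates, with fractional cascading keeping the query time at $O(\log n)$. The only difference is the immaterial transposition of which key governs the primary tree and which the secondary arrays.
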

\begin{proof} 
Let $V_q$ be the vertical line through $q$, let $H_q$ be the 
horizontal line through $q$, and let $D_q$ be the line through $q$ 
that makes an angle of $\pi/4$ with the positive $x$-axis. 
Let $\VD_q$ be the cone consisting of all points in the plane that 
are to the right of $V_q$ and above $D_q$. 
Let $\DHP_q$ be the cone consisting of all points in the plane that 
are above $H_q$ and below $D_q$. (Refer to Figure~\ref{figVq}.) 

\begin{figure}
\begin{center}
\includegraphics[scale=0.65]{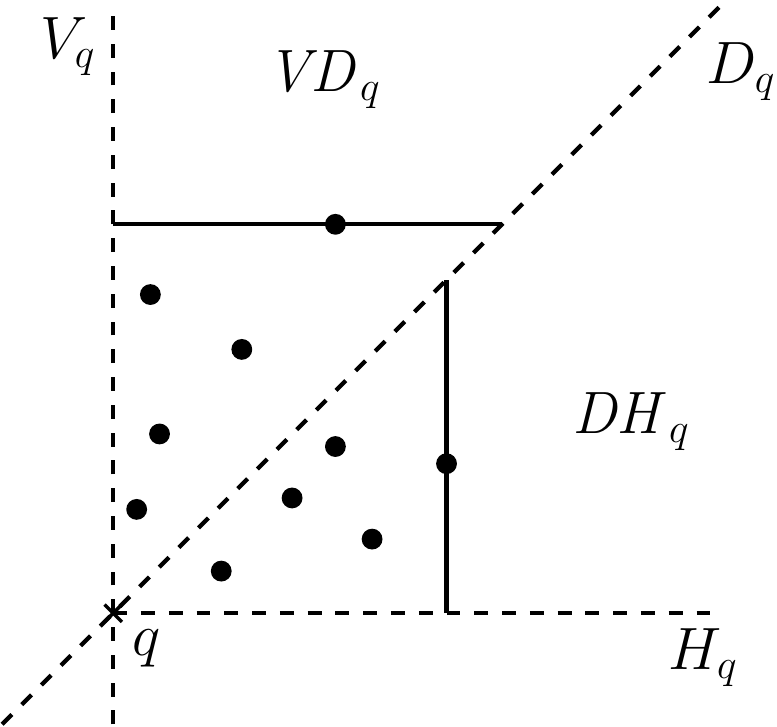}
\end{center}
\caption{Illustrating the proof of Lemma~\ref{lemma4} for $c=5$.}
\label{figVq}
\end{figure}

If we are given the $c$ lowest points of $\VD_q \cap S$ and the $c$ 
leftmost points of $\DHP_q \cap S$, then we can answer the query in 
$O(c) = O(1)$ time. Below, we show how a variant of the range tree can 
be used to compute the $c$ lowest points of $\VD_q \cap S$ in 
$O(\log n)$ time. The $c$ leftmost points of $\DHP_q \cap S$ can be 
computed in a symmetric way. 

Consider a range tree for $S$, as described before. Recall that the 
balanced binary tree $T$ stores the points of $S$ at its leaves, in 
sorted order of their $x$-coordinates. For each node $u$ of $T$, we 
store the points $p=(p_x,p_y)$ of $S_u$ in the array $A_u$. Instead of 
storing these points in sorted order of their $y$-coordinates, we store 
them in sorted order of their values $p_y-p_x$. Each point $p$ in $A_u$ 
gets the value $p_y$ as its weight. With each such point $p$, we also 
store the $c$ smallest weights in the suffix of $A_u$ that starts at $p$. 
(If this suffix has length less than $c$, then we store with $p$ all 
points in the suffix.) The additional information stored with this array 
can be computed in time that is proportional to the length of $A_u$, by 
traversing it in reverse order. Thus, since $c$ is a constant, the entire 
data structure has size $O(n \log n)$ and can be constructed in 
$O(n \log n)$ time. 

Consider a query point $q=(q_x,q_y)$. Observe that a point $p=(p_x,p_y)$ 
is in $\VD_q$ if and only if $p_x \geq q_x$ and $p_y-p_x \geq q_y-q_x$. 
Using the query algorithm for range trees, with query range 
$[q_x,\infty) \times [q_y-q_x,\infty)$, we compute, in $O(\log n)$ total 
time, the sequence $u_i$, $i = 1,2,\ldots,k=O(\log n)$, of nodes in $T$, 
and the sequences of indices $\alpha_i$ and $\beta_i$, $i=1,2,\ldots,k$. 
Observe that each $\beta_i$ is the largest index in the array $A_{u_i}$. 
For each $i$, the array entry $A_{u_i}[\alpha_i]$ stores the $c$ smallest 
weights in the suffix of $A_{u_i}$ that starts at position $\alpha_i$. 
Thus, in $O(c \log n) = O(\log n)$ additional time, we obtain the $c$ 
lowest points of $\VD_q \cap S$. 
\end{proof} 

We conclude this section, by recalling the \emph{Yao-graph}, as 
introduced in Yao~\cite{y-82}. For any point $p=(p_x,p_y)$ in the 
plane, define its four \emph{quadrants} to be the regions (refer to the 
left part of Figure~\ref{figyao})  
\begin{eqnarray*} 
  Q_1(p) & = & [p_x,\infty) \times [p_y,\infty) , \\
  Q_2(p) & = & (-\infty,p_x] \times [p_y,\infty) , \\
  Q_3(p) & = & (-\infty,p_x] \times (-\infty,p_y] , \\
  Q_4(p) & = & [p_x,\infty) \times (-\infty,p_y] . 
\end{eqnarray*} 

\begin{figure}
\begin{center}
\includegraphics[scale=0.65]{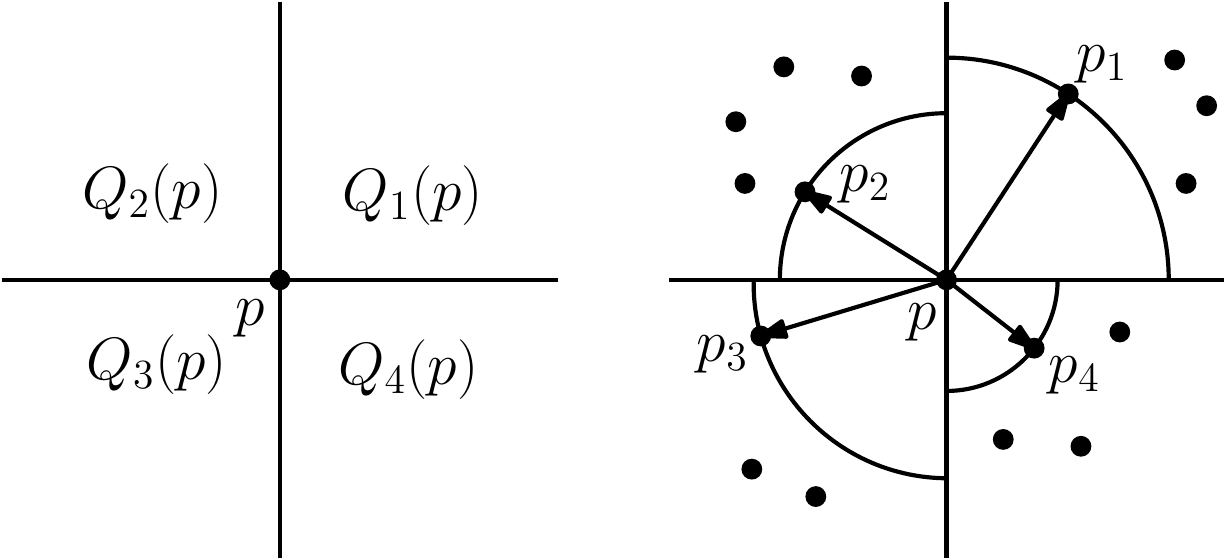}
\end{center}
\caption{On the left, the four quadrants of $p$ are shown. On the right, 
the four out-going edges of $p$ in $\Yao(S)$ are shown.}
\label{figyao}
\end{figure}

Let $S$ be a set of $n$ points in the plane. The Yao-graph is the 
directed graph $\Yao(S)$ with vertex set $S$, whose edge set is obtained 
in the following way (refer to the right part of Figure~\ref{figyao}). 
For any point $p$ in $S$ and any $k=1,2,3,4$ such that 
$Q_k(p) \cap (S \setminus \{p\}) \neq \emptyset$, let $p_k$ be the point 
in $Q_k(p) \cap (S \setminus \{p\})$ whose distance to $p$ is minimum. 
Then, the directed edge $(p,p_k)$ is added to the edge set of $\Yao(S)$. 

The following result is due to Chang et al.~\cite{cht-90}: 

\begin{lemma} 
\label{lemmaYao} 
Let $S$ be a set of $n$ points in the plane. The graph $\Yao(S)$ can 
be computed in $O(n \log n)$ time using $O(n)$ space.  
\end{lemma}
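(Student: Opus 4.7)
The plan is to exploit the symmetry between the four quadrants $Q_1,\ldots,Q_4$ to reduce to four instances of a single subproblem: given $S$, compute for every $p\in S$ the Euclidean nearest neighbor in $Q_1(p)\cap(S\setminus\{p\})$, in $O(n\log n)$ time and $O(n)$ space. Running this procedure three additional times on appropriately reflected copies of $S$ produces the remaining edges of $\Yao(S)$.

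For the single-quadrant subproblem I would use a plane sweep, processing the points of $S$ in decreasing order of $x$-coordinate. At the moment $p$ is processed, the set $A$ of already-processed points equals $\{q\in S : q_x > p_x\}$, so the task reduces to finding the closest $q\in A$ with $q_y\ge p_y$. A key structural observation is that the answer must be a lower-left minimal element of $A\cap Q_1(p)$ under coordinate-wise domination: if some $q'\in A\cap Q_1(p)$ dominated a candidate $q$, then $q'$ would lie inside the axis-aligned rectangle having $p$ and $q$ as opposite corners and hence be strictly closer to $p$. Consequently, it suffices to maintain the lower-left staircase of $A$ in a balanced BST keyed by $y$-coordinate. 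Since each point of $S$ enters and leaves the staircase at most once, the total maintenance cost is $O(n\log n)$ and the space is $O(n)$.

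The main obstacle is the query step: among staircase points with $y$-coordinate at least $p_y$, locate the one minimizing $(q_x-p_x)^2+(q_y-p_y)^2$. A naive binary search does not suffice, because this squared-distance function is neither monotone nor unimodal along the staircase in general: the two competing terms, one increasing and one decreasing as we move along the staircase, can together yield multiple local minima. My plan is to augment each BST node with a succinct summary of the lower envelope of the distance-squared functions over its subtree, so that an $O(\log n)$ descent can always follow the branch containing the true minimizer. Designing this augmentation so that it supports updates in amortized $O(\log n)$ time per sweep event while keeping the total space linear in $n$ is the delicate part, and is precisely the ingredient that I expect matches the contribution of Chang et al.~\cite{cht-90}.
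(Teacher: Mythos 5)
The paper offers no proof of this lemma; it is quoted from Chang et al.~\cite{cht-90}, whose algorithm constructs the four quadrant-restricted nearest-neighbor relations via oriented (angle-restricted) Voronoi diagrams built by divide-and-conquer. Your sweep-based attempt is a reasonable alternative outline, and the reduction to four reflected copies of a single-quadrant problem is fine, but it has two genuine gaps.

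First, the staircase reduction is incorrect as stated. Your structural observation only shows that the answer is minimal in $A \cap Q_1(p)$, i.e., on the staircase of $\{q \in A : q_y \geq p_y\}$ --- a set that depends on the query's $y$-threshold. That is not a subset of the lower-left staircase of $A$, which is what you propose to maintain. Concretely, take $p=(0,0)$ and $A=\{(1,1),\,(0.5,-10)\}$: the point $(0.5,-10)$ dominates $(1,1)$, so your structure evicts $(1,1)$ from the staircase, yet $(1,1)$ is the unique (hence nearest) point of $A$ in $Q_1(p)$ and must yield a Yao edge. The dominator lies below $p_y$ and so is irrelevant to the query, but it has already destroyed the information you need. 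Second, even where the staircase reduction is valid, the actual query --- minimizing the (correctly observed to be non-unimodal) distance to $p$ over the relevant staircase portion in $O(\log n)$ time --- is not solved but deferred. The proposed augmentation stores, per subtree, the lower envelope of the functions $p \mapsto |pq|^2$, which is precisely the Voronoi diagram of the subtree's points; maintaining such summaries under staircase insertions and deletions in amortized $O(\log n)$ time and total linear space is essentially a dynamic planar nearest-neighbor structure, which is not an available elementary ingredient and is at least as hard as the lemma itself. Since you explicitly identify this as ``the ingredient that matches the contribution of Chang et al.,'' the proposal reduces the lemma to an unproved claim rather than proving it.
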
 

We partition the edges of the Yao-graph into four subgraphs, based on 
the orientation of these edges: For each $k=1,2,3,4$, $\Yao_k(S)$ 
denotes the directed graph with vertex set $S$ that consists of all edges 
$(p,q)$ in $\Yao(S)$ for which the point $q$ is in the quadrant $Q_k(p)$. 
Each point of $S$ has out-degree zero or one in $\Yao_k(S)$.

\section{The Data Structure for Range Closest Pair Queries} 
\label{secmain} 

In this section, we will present the proof of Theorem~\ref{thm1}. 
Let $S$ be a set of $n$ points in the plane. Our data structure 
consists of the following:  

\begin{enumerate} 
\item $\DS_{\RC}(S)$: This is the data structure of Lemma~\ref{lemma2} 
      that stores the point set $S$ and supports range reporting and 
      counting queries with rectangles.
\item $\DS_{\mbox{\tiny{$\nearrow$}}}(S)$: This is the data structure of 
      Lemma~\ref{lemma4}, with $c=5$, that stores the point set $S$ and 
      returns the smallest square whose bottom-left corner is at a query 
      point and that contains at least $5$ points of $S$ . 
\item $\DS_{\mbox{\tiny{$\nwarrow$}}}(S)$, 
      $\DS_{\mbox{\tiny{$\swarrow$}}}(S)$, 
      $\DS_{\mbox{\tiny{$\searrow$}}}(S)$: 
      These are the three variants of 
      $\DS_{\mbox{\tiny{$\nearrow$}}}(S)$, where bottom-left is replaced 
      by bottom-right, top-right, and top-left, respectively. 
\item Use Lemma~\ref{lemmaYao} to compute the four directed graphs 
      $\Yao_k(S)$, for $k=1,2,3,4$. 
\item For each $k=1,2,3,4$, do the following: Let $S_k$ be the set 
      of all points in $S$ that have out-degree one in the graph 
      $\Yao_k(S)$. Give each point $p$ in $S_k$ a weight which is 
      equal to the length of its out-going edge in $\Yao_k(S)$. 
      Construct the data structure of Lemma~\ref{lemma23} for the 
      weighted point set $S_k$ that reports the minimum weight of any 
      point of $S_k$ inside a query rectangle. We denote this data 
      structure by $\DS_{\MW,k}(S_k)$. 
\end{enumerate} 

It follows from the results in Section~\ref{secprelim} that the total 
preprocessing time is $O(n \log n)$ and the entire data structure has 
size $O(n \log n)$.  

Let $R = [a_x,b_x] \times [a_y,b_y]$ be a query rectangle. Below, we 
will present the algorithm that uses our data structure to compute the 
closest-pair distance in the set $R \cap S$. The algorithm can easily 
be extended such that it also reports the actual closest pair. 

We assume, without loss of generality, that the horizontal side length 
of $R$ is at least the vertical side length. Let $\ell$ be the length of 
the vertical sides of $R$, and let $f$ be the aspect ratio of $R$. 
Thus, the horizontal sides of $R$ have length $f \ell$. 
Refer to Figure~\ref{figqueryR}. 

\begin{figure}
\begin{center}
\includegraphics[scale=0.65]{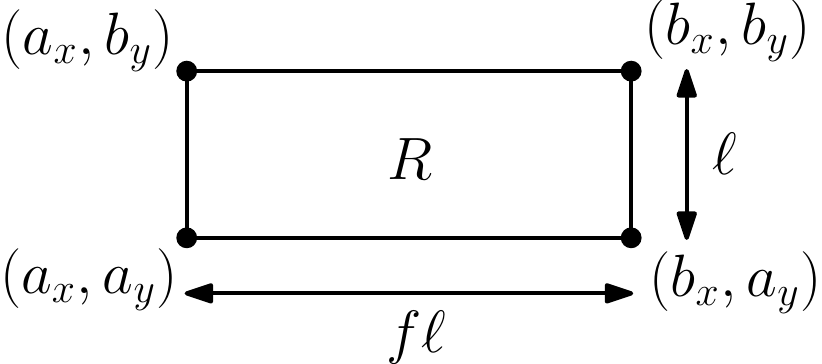}
\end{center}
\caption{The query rectangle $R$.}
\label{figqueryR}
\end{figure}

\begin{figure}
\begin{center}
\includegraphics[scale=0.65]{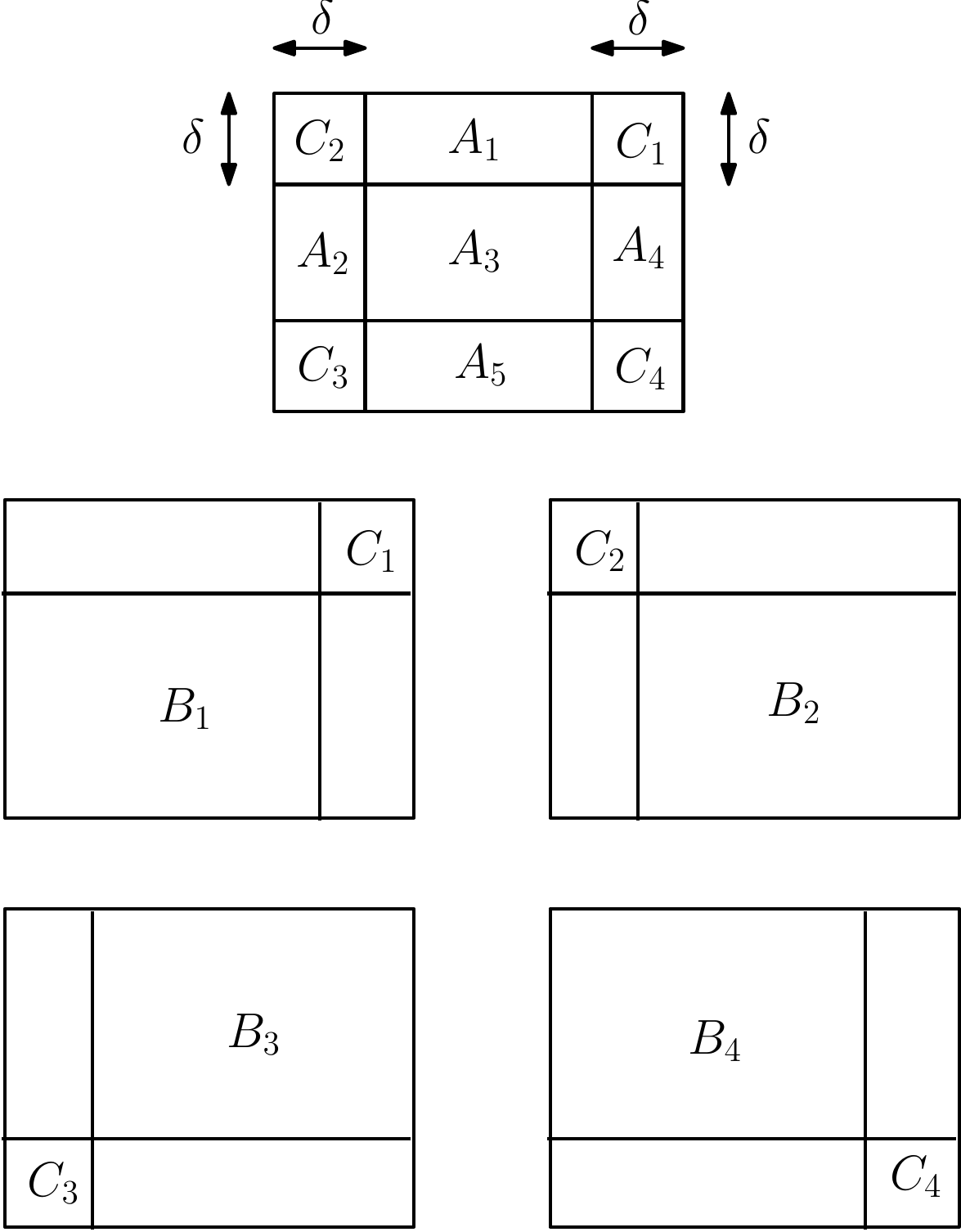}
\end{center}
\caption{Illustrating the partition of the query rectangle $R$.}
\label{fignine}
\end{figure}

Before we present the query algorithm, we introduce some notation. 
Let $\delta$ be a real number with $0 < \delta \leq \ell/2$. 
We denote by $C_1$, $C_2$, $C_3$, and $C_4$ the squares with sides of 
length $\delta$ that are anchored at the four corners of the query 
rectangle $R$, as indicated in Figure~\ref{fignine}. By drawing 
horizontal and vertical lines through the horizontal and vertical sides 
of these squares, the part of the rectangle $R$ without these four 
squares is partitioned into five rectangles. We denote these rectangles 
by $A_1, A_2, \ldots, A_5$, as indicated in Figure~\ref{fignine}. 
Next we define the following four rectangles (again, refer to 
Figure~\ref{fignine}): 
\begin{eqnarray*} 
  B_1 & = & C_3 \cup A_2 \cup A_3 \cup A_5 , \\ 
  B_2 & = & C_4 \cup A_3 \cup A_4 \cup A_5 , \\
  B_3 & = & C_1 \cup A_1 \cup A_3 \cup A_4 , \\
  B_4 & = & C_2 \cup A_1 \cup A_2 \cup A_3 . 
\end{eqnarray*} 

\begin{lemma} 
\label{obs1} 
Let $p$ and $q$ be two points in the plane such that $p$ is to the left
of $q$ and $|pq| < \delta$. 
\begin{enumerate} 
\item If $q \in Q_1(p)$ and $p \in B_1$, then $q \in R$. 
\item If $q \in Q_1(p)$ and $q \in B_3$, then $p \in R$. 
\item If $q \in Q_4(p)$ and $p \in B_4$, then $q \in R$. 
\item If $q \in Q_4(p)$ and $q \in B_2$, then $p \in R$.
\item If both $p$ and $q$ are in $R$ and $q \in Q_1(p)$, then at 
least one of the following holds: (i) $p \in B_1$, (ii) $q \in B_3$, 
(iii) both $p$ and $q$ are in $C_2$, (iv) both $p$ and $q$ are in $C_4$. 
\item If both $p$ and $q$ are in $R$ and $q \in Q_4(p)$, then at 
least one of the following holds: (i) $p \in B_4$, (ii) $q \in B_2$, 
(iii) both $p$ and $q$ are in $C_1$, (iv) both $p$ and $q$ are in $C_3$. 
\end{enumerate} 
\end{lemma}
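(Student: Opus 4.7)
The plan is to first re-express each of the four regions $B_1, B_2, B_3, B_4$ as an axis-aligned sub-rectangle of $R$ obtained by cutting off a $\delta$-wide strip along two adjacent sides of $R$. Reading off the nine cells in Figure~\ref{fignine} yields
\[
  B_1 = R \cap \bigl( (-\infty, b_x - \delta] \times (-\infty, b_y - \delta] \bigr),
\]
\[
  B_2 = R \cap \bigl( [a_x + \delta, \infty) \times (-\infty, b_y - \delta] \bigr),
\]
\[
  B_3 = R \cap \bigl( [a_x + \delta, \infty) \times [a_y + \delta, \infty) \bigr),
\]
\[
  B_4 = R \cap \bigl( (-\infty, b_x - \delta] \times [a_y + \delta, \infty) \bigr).
\]
Thus $B_1$ is the ``safe'' region in which a translation by a vector of length less than $\delta$ into the quadrant $Q_1$ stays inside $R$, and the other three $B_i$ play the analogous role for $Q_2$, $Q_3$, and $Q_4$.

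With these characterizations in hand, claims 1--4 reduce to one-line coordinate estimates. For claim~1, if $p \in B_1$ and $q \in Q_1(p)$ with $|pq| < \delta$, then $a_x \leq p_x \leq q_x < p_x + \delta \leq b_x$ and $a_y \leq p_y \leq q_y < p_y + \delta \leq b_y$, so $q \in R$. Claims~2, 3, and~4 follow by exactly the same calculation after exchanging the roles of $p$ and $q$, or flipping the sign of the $y$-shift, as needed.

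For claim~5, I argue by contraposition. Assume $p, q \in R$ with $q \in Q_1(p)$ but $p \notin B_1$ and $q \notin B_3$. Then $p_x > b_x - \delta$ or $p_y > b_y - \delta$, and independently $q_x < a_x + \delta$ or $q_y < a_y + \delta$, yielding four sub-cases. The two ``parallel'' sub-cases (both $x$-conditions, or both $y$-conditions) are impossible: $p_x > b_x - \delta$ combined with $q_x < a_x + \delta$ and $p_x \leq q_x$ would force $b_x - a_x < 2\delta$, contradicting $b_x - a_x = f\ell \geq \ell \geq 2\delta$, and symmetrically $p_y > b_y - \delta$ with $q_y < a_y + \delta$ contradicts $b_y - a_y = \ell \geq 2\delta$. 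In the two remaining ``crossed'' sub-cases, $p$ and $q$ each satisfy the defining inequalities of a single corner square: $p_y > b_y - \delta$ together with $q_x < a_x + \delta$, the monotonicity $p_x \leq q_x$ and $p_y \leq q_y$, and $p, q \in R$ force $\{p, q\} \subseteq C_2$; symmetrically, $p_x > b_x - \delta$ with $q_y < a_y + \delta$ forces $\{p, q\} \subseteq C_4$. Claim~6 is proved identically with $Q_1$ replaced by $Q_4$ and $C_2, C_4$ replaced by $C_1, C_3$.

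I do not foresee a serious obstacle; the only care required is bookkeeping of the defining inequalities of the nine cells and invoking the aspect-ratio bound $f \geq 1$ at the correct step, since it is precisely $f \geq 1$ together with $\delta \leq \ell/2$ that rules out the two impossible ``parallel'' sub-cases.
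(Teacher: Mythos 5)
Your proof is correct. The half-plane characterizations you read off for $B_1,\dots,B_4$ agree with the paper's definitions (e.g.\ $B_1=C_3\cup A_2\cup A_3\cup A_5=[a_x,b_x-\delta]\times[a_y,b_y-\delta]$ is exactly $R$ with the $\delta$-strips along the top and right sides removed), and claims 1--4 then follow by the one-line coordinate estimates you give, just as in the paper. For claim~5 your route is genuinely different in organization: the paper argues directly by placing $p$ in one of the five cells of $R\setminus B_1$ (namely $C_2$, $A_1$, $C_1$, $A_4$, $C_4$) and tracking, cell by cell, where $q\in Q_1(p)\cap R$ can land; you instead take the contrapositive, expand the failure of $p\in B_1$ and $q\in B_3$ by De~Morgan into four sub-cases, eliminate the two ``parallel'' ones using $b_x-a_x\ge b_y-a_y=\ell\ge 2\delta$, and show each ``crossed'' one pins both points into a single corner square. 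The content is the same elementary bookkeeping, but your version makes explicit where the hypotheses $f\ge 1$ and $\delta\le\ell/2$ are actually needed, and it makes visible that claim~5 holds without the assumption $|pq|<\delta$ (true of the paper's argument as well, though less apparent there). Both proofs are valid; yours is arguably the cleaner of the two for claims 5 and 6.
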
 
\begin{proof}
By symmetry, it suffices to prove the first and fifth claims. To prove 
the first claim, assume that $q \in Q_1(p)$ and $p \in B_1$. Recall that 
the square $C_1$ has sides of length $\delta$. Since $|pq| < \delta$, 
both the horizontal and vertical distances between $p$ and $q$ are 
less than $\delta$. This implies that $q$ is contained in $R$.

To prove the fifth claim, assume that both $p$ and $q$ are in $R$ and 
$q \in Q_1(p)$. If $p \in B_1$, then (i) holds. Assume that 
$p \not\in B_1$. Then $p \in C_2 \cup A_1 \cup C_1 \cup A_4 \cup C_4$. 
If $p \in C_2$, then $q \in C_2 \cup A_1 \cup C_1 \subseteq C_2 \cup B_3$
and, thus, (ii) or (iii) holds. 
If $p \in A_1$, then $q \in A_1 \cup C_1 \subseteq B_3$ and, thus, 
(ii) holds. 
If $p \in C_1$, then $q \in C_1 \subseteq B_3$ and, thus (ii) holds. 
If $p \in A_4$, then $q \in A_4 \cup C_1 \subseteq B_3$ and, thus, 
(ii) holds. Finally, if $p \in C_4$, then 
$q \in C_4 \cup A_4 \cup C_1 \subseteq C_4 \cup B_3$ and, thus (ii) 
or (iv) holds.  
\end{proof} 

We are now ready to present the algorithm that computes the closest-pair 
distance in $R \cap S$: 

\vspace{0.5em} 

\noindent
{\bf Step 1:} Use the data structure $\DS_{\RC}(S)$ to compute 
$|R \cap S|$. 
\begin{itemize} 
\item If $|R \cap S| \leq 4 \lceil 4f \rceil$, then use the data structure 
$\DS_{\RC}(S)$ to compute the elements of the set $R \cap S$, and 
use the algorithm of Lemma~\ref{lemma3} to compute the closest-pair 
distance in $R \cap S$. Return this closest-pair distance and terminate 
the query algorithm. 
\item If $|R \cap S| > 4 \lceil 4f \rceil$, proceed with Step~2. 
\end{itemize} 

\vspace{0.5em} 

\noindent
{\bf Step 2:} Use the data structures 
$\DS_{\mbox{\tiny{$\nearrow$}}}(S)$,
$\DS_{\mbox{\tiny{$\nwarrow$}}}(S)$, 
$\DS_{\mbox{\tiny{$\swarrow$}}}(S)$, and 
$\DS_{\mbox{\tiny{$\searrow$}}}(S)$ to compute the following squares: 
\begin{enumerate} 
\item The smallest square with bottom-left corner at $(a_x,a_y)$ that 
      contains at least $5$ points of $S$. 
\item The smallest square with bottom-right corner at $(b_x,a_y)$ that 
      contains at least $5$ points of $S$. 
\item The smallest square with top-right corner at $(b_x,b_y)$ that 
      contains at least $5$ points of~$S$. 
\item The smallest square with top-left corner at $(a_x,b_y)$ that 
      contains at least $5$ points of $S$. 
\end{enumerate} 
Let $\ell'$ be the side length of the smallest of these four squares. 
If $\ell' > \ell/2$, then let $\delta = \ell/2$. Otherwise, let 
$\delta = \ell'$. Observe that the value of $\delta$ determines the 
squares $C_k$ and the rectangles $B_k$, $1 \leq k \leq 4$. 

\vspace{0.5em} 

\noindent
{\bf Step 3:} For each $k=1,2,3,4$, do the following: Use the data 
structure $\DS_{\MW,k}(S_k)$ to compute the minimum weight of any 
point in $B_k \cap S_k$. If this minimum weight is less than~$\delta$,  
then let $w_k$ be this minimum weight. Otherwise, let $w_k = \infty$. 
 
Compute the value $\delta_1 = \min \{ w_k : 1 \leq k \leq 4 \}$.   

\vspace{0.5em} 

\noindent
{\bf Step 4:} For each $k=1,2,3,4$, do the following: 
Use the data structure $\DS_{\RC}(S)$ to compute the elements 
of the point set $C_k \cap S$. Then use a brute-force algorithm to 
compute the closest pair among these elements. Let $w'_k$ denote the 
closest-pair distance. (In case $|C_k \cap S| \leq 1$, we have 
$w'_k = \infty$.)  

Compute the value $\delta_2 = \min \{ w'_k : 1 \leq k \leq 4 \}$.   

\vspace{0.5em} 

\noindent
{\bf Step 5:} Return the minimum value among $\delta_1$ and $\delta_2$.   

\vspace{0.5em} 

Before we prove the correctness of this algorithm, we analyze its 
running time. By Lemmas~\ref{lemma3} and~\ref{lemma2}, Step~1 takes 
$O(\log n + f \log f)$ time. By Lemma~\ref{lemma4}, Step~2 takes 
$O(\log n)$ time, whereas Step~3 takes $O(\log n)$ time by 
Lemma~\ref{lemma23}. Since each of the squares $C_k$, $1 \leq k \leq 4$, 
contains $O(1)$ points of $S$, it follows from Lemma~\ref{lemma2} that 
Step~4 takes $O(\log n)$ time. Thus, the overall query time is 
$O(\log n + f \log f)$. 

\subsection{Correctness of the Query Algorithm} 
\label{seccorrect} 
To complete the proof of Theorem~\ref{thm1}, it remains to prove the 
correctness of the query algorithm. 
If $|R \cap S| \leq 4 \lceil 4f \rceil$, then the query algorithm 
returns the correct closest-pair distance in the set $R \cap S$. 

Assume that $|R \cap S| > 4 \lceil 4f \rceil$. It follows from the first 
four claims in Lemma~\ref{obs1} that each finite value $w_k$ 
computed in Step~3 is a distance between two distinct points of 
$R \cap S$. Since each square $C_k$ is contained in $R$, the same 
is true for each finite value $w'_k$ computed in Step~4. Thus, the value 
returned in Step~5 cannot be smaller than $\CPD(R \cap S)$. 

By Lemma~\ref{lemma1}, we have $\CPD(R \cap S) < \ell/2$. 
Consider the value $\delta$ that is computed in Step~2. Then,
$0 < \delta \leq \ell/2$ and, again by Lemma~\ref{lemma1}, we have 
$\CPD(R \cap S) < \delta$. 

Let $p,q$ be the closest pair in the set $R \cap S$ and assume, without 
loss of generality, that $p$ is to the left of $q$ and $q \in Q_1(p)$. 
We will prove that either the value $\delta_1$ computed in Step~3 is 
equal to $|pq|$ or the value $\delta_2$ computed in Step~4 is equal to 
$|pq|$. By the fifth claim in Lemma~\ref{obs1}, there are four 
possible cases. 

First assume that $p \in B_1$. Consider the disk of radius $\delta$ that 
is centered at $p$. The quarter of this disk that is inside $Q_1(p)$ is 
completely contained in the rectangle $R$. Since $p,q$ is the closest 
pair in $R \cap S$, this quarter disk does not contain any point of $S$ in 
its interior. Therefore, $(p,q)$ is an edge in $\Yao_1(S)$. It follows
that the value $w_1$ computed in Step~3 is equal to $|pq|$. This, in 
turn, implies that the value $\delta_1$ computed in Step~3 is equal 
to $|pq|$. 

The second case is when $q \in B_3$. By a symmetric argument, $(q,p)$ 
is an edge in $\Yao_3(S)$. Thus, the value $w_3$ computed in Step~3 is 
equal to $|pq|$, implying that $\delta_1 = |pq|$. 

The two remaining cases are when both $p$ and $q$ are in $C_2$ or in
$C_4$. In the former case, the value $w'_2$ computed in Step~4 is equal 
to $|pq|$, whereas in the latter case, $w'_4 = |pq|$. In either case, 
the value $\delta_2$ computed in Step~4 is equal to $|pq|$. 
This concludes the correctness proof of our query algorithm. 

\section{Concluding Remarks} 
\label{seccon}

\subsection{Solutions Based on Candidate Pairs}  
All previous work on the range closest pair problem heavily depends on 
the notion of a \emph{candidate pair}. Let $\mathcal{F}$ be a family of 
regions in the plane. For a given set $S$ of $n$ points in the plane, 
a pair $p,q$ of distinct points in $S$ is called a candidate pair
with respect to $\mathcal{F}$, if there exists a region $R$ in
$\mathcal{F}$ such that $p,q$ is the closest pair in $R \cap S$.

For example, Gupta et al.~\cite{gjks-14} have shown that for the family 
of quadrants $Q_1(a)$, $a \in \IR^2$, two candidate pairs cannot cross. 
Thus, the number of candidate pairs for this family is $O(n)$. 
For the family of vertical slabs, Sharathkumar and Gupta~\cite{sg-07} 
have shown that the number of candidate pairs is $O(n \log n)$. 

Consider the set of all candidate pairs for a given family $\mathcal{F}$ 
of regions. These pairs define a graph with vertex set $S$ that has 
one edge for each candidate pair. We give each such edge a weight which 
is the Euclidean distance between its two vertices. Then a range closest 
pair query for a given region $R$ in $\mathcal{F}$ reduces to 
determining the shortest edge that is completely contained inside $R$. 
Obviously, the amount of space used by this approach is at least 
the number of candidate pairs. The following lemma shows that this 
approach does not lead to a space-efficient data structure for fat 
rectangles.   

\begin{lemma} 
Let $f>1$ be a real number and let $\mathcal{F}_f$ be the family 
consisting of all rectangles in the plane having aspect ratio at most $f$. 
There exists a set $S$ of $n$ points in the plane, for which the number 
of candidate pairs with respect to $\mathcal{F}_f$ is $\Omega(n^2)$. 
\end{lemma}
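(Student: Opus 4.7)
The plan is to construct an explicit point set. Take $m := \lfloor n/2 \rfloor$ and place
\[
A_i := (i, i), \qquad B_j := (j + m, j), \qquad 1 \le i, j \le m,
\]
after an arbitrarily small generic perturbation to comply with the ``no shared coordinate'' assumption of Section~\ref{secprelim}. So the $A_i$'s lie on one diagonal segment and the $B_j$'s on a horizontal translate of it, shifted right by $m$.

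The first step I would carry out is the geometric observation underlying the whole construction: for every pair $A_i, B_j$ with $j < i$, its axis-aligned bounding rectangle $R_{ij} := [i, j+m] \times [j, i]$ is empty of other points of $S$. This is a direct range-check --- any $A_k$ in $R_{ij}$ forces $k \in [i,j+m]\cap[j,i]=\{i\}$, and any $B_\ell$ forces $\ell \in [i-m,j]\cap[j,i]=\{j\}$ --- so $\{A_i, B_j\}$ is the (unique) closest pair of $S \cap R_{ij}$, and hence a candidate pair whenever $R_{ij}$ itself lies in $\mathcal{F}_f$.

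Next I would do the aspect-ratio counting. Setting $k := i - j \in \{1, \ldots, m-1\}$, the rectangle $R_{ij}$ has width $m - k$ and height $k$, so aspect ratio $\max(k, m-k)/\min(k, m-k)$; a routine calculation gives that this is at most $f$ precisely when $k \in [m/(f+1),\, fm/(f+1)]$, an interval of length $m(f-1)/(f+1) = \Theta(m)$ whenever $f > 1$. For each such $k$ there are $m - k = \Theta(m)$ pairs $(i, j)$ with $i - j = k$, which sum to $\Theta(m^2) = \Omega(n^2)$ candidate pairs.

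The hardest part, I expect, is inventing the configuration itself rather than verifying any one step. The design goal is that pairwise bounding boxes be simultaneously empty of other points and span a wide range of aspect ratios. The two-parallel-diagonals layout meets both: monotonicity along each diagonal forces $R_{ij}$ to be empty, while the horizontal offset by $m$ makes the aspect ratio of $R_{ij}$ vary with $k$ across $[1, m-1]$, so that a $\Theta(m)$-sized slice of $k$-values stays below any preassigned $f > 1$.
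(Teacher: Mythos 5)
Your proof is correct, and it follows the same overall strategy as the paper---exhibit two groups of $\Theta(n)$ points such that for $\Omega(n^2)$ cross pairs the axis-aligned bounding box is empty of all other points of $S$ and has aspect ratio at most $f$---but the concrete construction is different. The paper places $n/2$ points on a circle near each of the two intersections of that circle with the line $y=x$; every one of the $(n/2)^2$ cross pairs then spans an empty, nearly square rectangle (emptiness comes from the convexity of the circle near the $45^\circ$ points, and the aspect ratio tends to $1$ as the clusters shrink), so all cross pairs qualify uniformly for every $f>1$. Your two-parallel-diagonals lattice trades this uniformity for a completely elementary verification: emptiness of $R_{ij}$ reduces to a one-line interval intersection, but the aspect ratio of $R_{ij}$ now genuinely varies with $k=i-j$, and only the $k$ in $[m/(f+1),\, fm/(f+1)]$ survive. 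That still gives $\Omega(n^2)$ candidate pairs for any fixed $f>1$, though your hidden constant degrades like $(f-1)/(f+1)$ as $f\to 1^{+}$, whereas the paper's count is independent of $f$. Both arguments are complete; yours carries the small extra obligation (which you acknowledge) of perturbing to satisfy the paper's general-position assumption, and this is harmless because every excluded point lies at coordinate distance at least $1$ from the rectangle $R_{ij}$ in question.
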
 
\begin{proof} 
Consider a circle centered at the origin. Let $a$ and $b$ be the two 
intersection points of this circle with the line $y=x$. Let $S_a$ be a 
set of $n/2$ points on this circle that are very close to $a$, and let 
$S_b$ be a set of $n/2$ points on this circle that are very close to $b$. 
Let $S = S_a \cup S_b$. For any point $p$ in $S_a$ and any point $q$ in 
$S_b$, the rectangle $Q$ with corners $p$ and $q$ has aspect ratio at 
most $f$ and does not contain any other point of~$S$. Thus, $p,q$ is a 
candidate pair. As a result, for the family $\mathcal{F}_f$, the number 
of candidate pairs in $S$ is at least $(n/2)^4$. 
\end{proof} 

Next we show that for the family of axes-parallel squares (i.e., 
the family $\mathcal{F}_f$ with $f=1$), the number of candidate pairs 
is $O(n)$. 

\begin{lemma}  \label{lemmafinal} 
Let $\mathcal{F}$ be the family consisting of all axes-parallel squares 
in the plane, and let $S$ be a set of $n$ points in the plane. The 
number of candidate pairs in $S$ with respect to $\mathcal{F}$ is 
$O(n)$. 
\end{lemma}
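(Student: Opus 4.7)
The plan is to show that every candidate pair with respect to the family of axes-parallel squares is an edge of a constant-order $L_\infty$-Delaunay graph on $S$, and then to invoke the standard $O(n)$ bound on the number of edges of such a graph.

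First, given a candidate pair $p,q$ with witness square $R$, I would shrink $R$ to an axes-parallel square $R^\ast \subseteq R$ of minimum side length that contains both $p$ and $q$. This minimum side length equals $s := \max(|p_x - q_x|, |p_y - q_y|)$, and such an $R^\ast$ always fits inside $R$ because $R$ is itself an axes-parallel square containing $p$ and $q$ (so the feasible placements of a side-$s$ square inside $R$ that contain $p$ and $q$ form a nonempty rectangle of corner-positions). Since $R^\ast \cap S \subseteq R \cap S$ and $p, q \in R^\ast$, the closest pair in $R^\ast \cap S$ remains $p, q$, so $\CPD(R^\ast \cap S) = |pq|$.

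I would then apply Lemma~\ref{lemma1}(2) to the square $R^\ast$. Because Euclidean distance dominates $L_\infty$ distance, $|pq| \geq s$, the side length of $R^\ast$. If we had $|R^\ast \cap S| \geq 5$, the lemma would give $\CPD(R^\ast \cap S) < s \leq |pq|$, contradicting $\CPD(R^\ast \cap S) = |pq|$. Hence $|R^\ast \cap S| \leq 4$, i.e., $R^\ast$ contains at most two points of $S$ besides $p$ and $q$. Thus every candidate pair is certified by an axes-parallel square containing only a constant number of points of $S$, which is precisely the defining condition for an edge of a constant-order $L_\infty$-Delaunay graph.

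Finally, I would conclude by appealing to the standard fact that the order-$k$ $L_\infty$-Voronoi diagram of $n$ points has complexity $O(k(n-k))$, so its dual Delaunay-type graph has $O(kn)$ edges; this is $O(n)$ for any constant $k$. The main obstacle is exactly this last step: cleanly matching the ``at most two extra points'' certificate to a higher-order $L_\infty$-Delaunay variant with an off-the-shelf $O(n)$ bound. The most direct route is the higher-order $L_\infty$-Voronoi complexity bound, but one could alternatively give a self-contained charging argument, assigning each candidate pair to one of the $O(1)$ points in its witness $R^\ast$ and bounding the re-use per point via the geometry of minimum enclosing squares.
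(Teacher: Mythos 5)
Your geometric reduction is essentially identical to the paper's: you shrink the witness square $R$ to a square $R^\ast\subseteq R$ of side $s=\max(|p_x-q_x|,|p_y-q_y|)$ with $p$ and $q$ on opposite sides, note that $p,q$ remains the closest pair in $R^\ast\cap S$ with $|pq|\geq s$, and invoke Lemma~\ref{lemma1}(2) to conclude that $R^\ast$ contains at most four points of $S$, hence at most two in its interior, so $pq$ is an edge of the order-$2$ $L_\infty$-Delaunay graph. The only place you diverge is the step you yourself flag as the obstacle: bounding the number of edges of that graph. The paper closes this by citing Bose et al.\ (Corollary~12 of~\cite{bch-10}), who show that the order-$k$ Delaunay graph can be partitioned into at most $18k^2$ plane graphs, giving $O(k^2n)$ edges, and by observing that their argument carries over to the $L_\infty$ metric. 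Your proposed substitute---reading the bound off the $O(k(n-k))$ complexity of the order-$k$ $L_\infty$-Voronoi diagram---is not an off-the-shelf match, because the order-$k$ Delaunay graph as defined here (a square through $p$ and $q$ with at most $k$ points inside) is a graph on the points, not the dual of the order-$k$ Voronoi diagram, which is a structure on $k$-element subsets; relating the two requires an additional argument (edges with exactly $j\leq k$ points in the witness square must be charged to features of the order-$j$ or order-$(j+1)$ diagrams). So the proof is correct in outline and matches the paper, but to make the last step rigorous you should either supply that Voronoi-order charging argument or simply cite the planarity-decomposition bound as the paper does.
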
 
\begin{proof} 
For any integer $k \geq 0$, the \emph{order}-$k$ 
$L_{\infty}$-\emph{Delaunay graph} $\Del(S,k)$ is the graph with vertex 
set $S$ in which any two distinct points $p$ and $q$ form an edge if 
their exists an axes-parallel square that has $p$ and $q$ on its 
boundary and contains at most $k$ points of $S$ in its interior.

Let $p,q$ be an arbitrary candidate pair and let $R$ be a square such 
that $p,q$ is the closest pair in $R \cap S$. We first prove that there 
exists a square $R'$ in $R$ such that $p$ and $q$ are on opposite sides 
of $R'$.

To prove this claim, let $\ell$ be the side length of $R$. We may assume, 
without loss of generality, that $q \in Q_1(p)$ and the horizontal 
distance $h$ between $p$ and $q$ is at most their vertical distance $v$.  
We take for $R'$ any square with sides of length $v$ that contains $p$ 
on the bottom side, $q$ on the top side, and that is contained in $R$. 
 
Obviously, $p,q$ is the closest pair in $R' \cap S$, implying that 
$\CPD(R' \cap S) \geq \ell'$, where $\ell'$ is the side length of $R'$. 
It then follows from the second claim in Lemma~\ref{lemma1} that $R'$ 
contains at most four points of $S$. Since $p$ and $q$ are two of 
these points, $R'$ contains at most two points of $S$ in its interior. 
Therefore, $pq$ is an edge in $\Del(S,2)$.

Bose et al.~\cite[Corollary~$12$]{bch-10} have shown that the order-$k$ 
Delaunay graph can be partitioned into at most $18 k^2$ graphs, each of
which is plane. This implies that the order-$k$ Delaunay graph has at 
most $O(k^2 n)$ edges. Even though they prove this result for the 
Euclidean metric, their arguments are valid for the $L_{\infty}$-metric 
as well. Thus, the number of edges in $\Del(S,2)$ is $O(n)$. This implies 
that the number of candidate pairs for the family of squares is $O(n)$. 
\end{proof} 

Lemma~\ref{lemmafinal} implies that the approach based on candidate 
pairs can be used to obtain a space-efficient solution for closest-pair 
queries with squares. The main drawbacks are that it is not clear how to 
compute all candidate pairs in $O(n \log n)$ time, and how to compute the 
shortest candidate pair inside a query square in $O(\log n)$ time.   

\subsection{Open Problems} 
The correctness of our query algorithm heavily uses the fact that the 
closest-pair distance is less than half of the shortest side length 
$\ell$ of the query rectangle, in case it contains $\Omega(f)$ points. 
Consider the value $\ell'$ that is computed in Step~2 of the query 
algorithm. If $\ell' \leq \ell/2$, then Step~1 is not necessary, and 
the query algorithm takes $O(\log n)$ time, even if the aspect ratio 
of the query rectangle is very large. We leave as an open problem 
to design a data structure of size $O(n \log n)$ and query time 
$O(\log n)$ that works for any query rectangle. Also, we leave open 
the problem of improving the space bound for fat rectangles to $O(n)$. 

For the range closest pair problem in $\IR^d$, with $d \geq 3$, no 
non-trivial results are known. In particular, if $d=3$, it is not known 
if queries can be answered in polylogarithmic time using a data 
structure whose size is close to linear.

\section*{Acknowledgements} 
This work was initiated at the 21st Korean Workshop on Computational 
Geometry, held in Rogla, Slovenia, in June 2018. 
The authors thank the other workshop participants for their helpful 
comments.

\bibliographystyle{plain}
\bibliography{ClosestPairFat}

\end{document}